\documentclass[11pt]{article}

\usepackage[utf8]{inputenc}
\usepackage[T1]{fontenc}
\usepackage[english]{babel}

\usepackage{amsmath,amssymb,amsfonts,amsthm}
\usepackage{mathabx}

\newtheorem{lemma}{Lemma}
\newtheorem{theorem}{Theorem}

\theoremstyle{definition}
\newtheorem{definition}{Definition}
\theoremstyle{proof}

\usepackage{algorithm}
\usepackage[noend]{algpseudocode}

\usepackage{cleveref}

\crefname{lemma}{Lemma}{Lemmas}
\Crefname{lemma}{Lemma}{Lemmas}
\crefname{theorem}{Theorem}{Theorems}
\Crefname{theorem}{Theorem}{Theorems}
\crefname{observation}{Observation}{Observations}
\Crefname{observation}{Observation}{Observations}
\crefname{definition}{Definition}{Definitions}
\Crefname{definition}{Definition}{Definitions}
\crefname{section}{Section}{Sections}
\Crefname{section}{Section}{Sections}
\crefname{figure}{Figure}{Figures}
\Crefname{figure}{Figure}{Figures}
\crefname{algorithm}{Algorithm}{Algorithms}
\Crefname{algorithm}{Algorithm}{Algorithms}
\crefname{equation}{}{}
\Crefname{equation}{}{}

\renewcommand{\subset}{\subseteq}



\newcommand{\ceil}[1]{\left\lceil{#1}\right\rceil}
\newcommand{\floor}[1]{\left\lfloor{#1}\right\rfloor}

\newcommand{\eps}{\varepsilon}

\newcommand{\abs}[1]{\left | #1 \right |}
\newcommand{\set}[1]{\left \{ #1 \right \}}

\newcommand\drop[1]{}

\newcommand\dfs{\operatorname{dfs}}

\usepackage[pdftex]{graphicx}
\usepackage[usenames,dvipsnames,table]{xcolor}
\definecolor{shade}{RGB}{235,235,235}

\usepackage{todonotes}

\newcommand{\mbtk}[1]{#1}


\newcommand*\samethanks[1][\value{footnote}]{\footnotemark[#1]}
\usepackage{authblk}
\usepackage{fullpage}

\title{A simple and optimal ancestry labeling scheme for trees}
\author{Søren Dahlgaard\thanks{Research partly supported by Thorup's
    Advanced Grant from the Danish Council for Independent Research
    under the Sapere Aude research career programme.}}
\author{Mathias Bæk Tejs Knudsen\samethanks\ \thanks{Research
    partly supported by the FNU project AlgoDisc -
    Discrete Mathematics, Algorithms, and Data
    Structures.}}
\author{Noy Rotbart }

\affil{
 Department of Computer Science, University of Copenhagen\\
  Universitetsparken 5, 2100 Copenhagen\\
  \texttt{\{soerend,knudsen,noyro\}@di.ku.dk} }

\begin{document}
\maketitle
\begin{abstract}
We present a   $\lg n + 2 \lg \lg n+3$ ancestry labeling scheme for trees.
The problem was first presented by Kannan et al. [STOC 88']  along with  a simple $2 \lg n$ solution.
Motivated by applications to XML files, the label size was  improved incrementally over the course of more than 20 years by a series of papers.
The last, due to Fraigniaud and Korman [STOC 10'], presented an   asymptotically optimal $\lg n + 4 \lg \lg n+O(1)$ labeling scheme  using non-trivial tree-decomposition techniques.
By  providing a  framework generalizing interval based labeling schemes, we obtain a simple, yet asymptotically optimal solution to the problem.
Furthermore, our labeling scheme is attained by a small  modification of  the original $2 \lg n$ solution.
\end{abstract}



\section{Introduction}\label{Sec:Intro}
The concept of \emph{labeling schemes}, introduced by Kannan, Naor and Rudich \cite{Kannan92}, is a  method to  assign bit strings, or  labels,  to the vertices of a graph   such that a query  between vertices can be inferred directly from the assigned labels, without using a  centralized data structure.
A labeling scheme for a family of graphs $\mathcal{F}$  consists of an \emph{encoder} and a \emph{decoder}.
Given a graph $G \in \mathcal{F}$, the encoder  assigns labels to each node in $G$, and the  decoder  can infer  the query given only a set of labels.
The main quality measure for  a labeling scheme  is the size of the largest  label size it assigns to a node of any graph of the entire family.
One of the  most well studied questions in the context of labeling schemes is the
\emph{ancestry problem}.  An ancestry  labeling scheme for the family of rooted  trees of $n$ nodes $\mathcal{F}$ assigns  labels to a tree $T \in \mathcal{F}$ such that given the labels $\ell(u), \ell(v)$ of any two nodes $u,v\in T$, one can determine whether $u$ is an \emph{ancestor} of $v$ in $T$.



Improving the label size for this question is highly motivated by  XML search engines.
An XML document can be viewed as a tree and queries over such documents amount to testing ancestry relations between nodes of these trees
\cite{Abiteboul99,Deutsch99,cohen2010labeling}.
Search engines process queries using an index structure summarizing the ancestor relations.
 It is imperative to the performance of such engines, that as much as possible of this index can
reside in main memory. The big size of web data thus implies that even a small
reduction in label size may significantly improve the memory cost and
performance. A more detailed explanation can be found in \cite{Abiteboul06}.

One solution to this problem, dating back to at least '74 by
Tarjan~\cite{Tarjan74}  and used  in the seminal paper in '92 by Kannan  et al.~\cite{Kannan92} is the following:
Given an $n$ node tree $T$  rooted in $r$, perform a DFS traversal and  for each $u \in T$ let $\dfs(u)$ be the
index of $u$ in the traversal.
Assign the label of $u$ as  the pair $\ell(u) = (\dfs(u), \dfs(v))$,
where   $v$ is the descendant of $u$ of largest  index in the DFS traversal.
Given two labels $\ell(u) = (\dfs(u), \dfs(v))$ and $\ell(w)$,
$u$ is an ancestor of $w$ iff $\dfs(u)\le \dfs(w) \le \dfs(v)$.
In other words the  label $\ell(u)$ represents  an interval, and ancestry is determined by interval containment.
Since every DFS index is a number  in $\{1,\ldots, n\}$
The  size of a label  assigned by this  labeling scheme is  at most $2\lg n$.\footnote{Throughout
this paper we use $\lg n$ to denote the base $2$ logarithm $\lg_2 n$.}

This labeling scheme, denoted  \textproc{Classic} from hereon,
was the first in a long line of research to minimize the label size to $1.5 \lg n$~\cite{abiteboul2001compact}, $\lg n +O(\lg n/ \lg \lg n)$~\cite{Thorup01}\footnote{The paper discussed proves this bound on routing, which can be used to determine ancestry.},   $\lg n+O(\sqrt{\lg n})$~\cite{alstrup2002improved,Abiteboul06}
and finally $\lg n + 4\lg \lg n+O(1)$~\cite{Korman10}, essentially matching  a lower bound of $\lg n + \lg \lg n-O(1)$~\cite{Alstrup05}.
Additional results for this labeling scheme are a $\lg n+O(\lg \delta)$
labeling scheme for trees of depth at most $\delta$~\cite{Fraigniaud10} and investigation on a dynamic variant with some pre-knowledge on the structure of the constructed  tree~\cite{cohen2010labeling}.
The asymptotically optimal  labeling scheme~\cite{Korman10} also  implies the existence of a \emph{universal poset} of size $O(n^k\lg^{4k} n)$.

Labeling schemes for other functions were considered. Among which are  adjacency \cite{Kannan92,Alstrup02,Alstrup14}, routing \cite{Fraigniaud01,Thorup01}, nearest common ancestor \cite{Alstrup02NCA,alstrup2013near} connectivity~\cite{Korman10connectivity}, distance~\cite{gavoillea2004distance}, and flow~\cite{katz2004labeling}.

\subsection{Our contribution}
We present a \emph{simple} ancestry labeling scheme of size  $\lg n + 2\lg\lg n + 3$.
Similarly to~\cite{Abiteboul06,Korman10} our labeling  scheme is based on assigning an interval to
each node of the tree. Our labeling scheme can be seen as an extension of the  \textproc{Classic} scheme described above, with the key difference that rather than storing the exact size of the interval, we store only an approximation thereof.
In order to store only this approximation, our scheme assigns intervals larger than needed, forcing us to use a range larger than $1, \dots ,n$.
Our main technical contribution is  to minimize the label size by balancing the approximation ratio with the size of the range required to accommodate the resulting labels.
While it is a challenge to prove the mathematical properties of our labeling scheme, describing  it can be done in a few lines of pseudocode.

The simplicity of our labeling scheme  contrast the  labeling scheme
of~\cite{Korman10} , which relies among other things, on a highly nontrivial
tree decomposition that must precede it's encoding. As a concrete example, our
encoder can be implemented using a single DFS traversal, and
results in a label size of $\lg n + 2\lg\lg n
+ O(1)$ bits compared to $\lg n + 4\lg\lg n + O(1)$. However, for trees of
constant depth, our scheme has size $\lg n + \lg\lg n + O(1)$, which is worse
than the bound achieved in \cite{Fraigniaud10}.

The paper is structured as follows:
 In \cref{sec:intervals} we present a general framework for describing ancestry
 labeling schemes based on the notion of \emph{left-including intervals},
     which we introduce in \Cref{defn:valid}.
\Cref{classificationNecessary,classificationSufficient} show the correctness of all labeling schemes which construct labels by assigning
\emph{left-including intervals} to nodes.
We illustrate the usefulness of this framework by using it to describe \textproc{Classic} in ~\Cref{sec:classic}.
Finally, In \cref{sec:new}  we use the framework to construct our new  approximation-based ancestry  labeling scheme.


\subsection{Preliminaries}\label{sec:prelims}

\newcommand{\ao}{\overline{a}}
\newcommand{\bo}{\overline{b}}
We use the notation  $[n] = \set{0,1,\ldots,n-1}$ and denote the concatenation of two bit strings $a$ and $b$ by $a \circ b$.
Let $T =(V,E)$ be a tree rooted in $r$, where $u,v \in V$.
The node $u$ is an \emph{ancestor} of $v$ if $u$ lies on the unique path from the root to
$v$, and we call  $v$  a \emph{descendant} of $u$ iff $u$ is an ancestor of $v$.
We denote the subtree rooted in $u$ as  $T_u$,  i.e.~the tree consisting
of all descendants of $u$, and stress that a node is both an ancestor and descendant of itself.
The encoding and decoding running time are described under  the word-RAM model.

We denote the \emph{interval} assigned to a node $u$ by  $I(u) = [a(u), b(u)]$, where
$a(u)$ and $b(u)$ denote the lower and upper part of the interval,
respectively.
We also define $\ao(u)$ and $\bo(u)$ to be the maximum value of
$a(v)$ respectively $b(v)$, where $v$ is a descendant of $u$ (note that this
includes $u$ itself). We will use the following notion:
\begin{definition}\label{defn:valid}
    Let $T$ be a rooted tree and $I$ an interval assignment defined on $V(T)$.
    We say that the interval assignment $I$ is \emph{left-including} if for each $u,v\in
    T$ it holds that $u$ is an ancestor of $v$ iff $a(v)\in I(u)$.
\end{definition}
In contrast to  \cref{defn:valid}, the  literature surveyed~\cite{Kannan92,abiteboul2001compact,alstrup2002improved,Korman10}   considers  intervals where
$u$ is an ancestor of $v$ iff $I(v)\subseteq I(u)$, i.e. the interval of a descendant node is fully contained in the interval of the ancestor.
This distinction is amongst the unused leverage points which we will use to arrive at our new labeling scheme.


\section{A framework for interval based labeling schemes}
\label{sec:intervals}

In this section we introduce a framework for assigning intervals to tree nodes.
We will see in \cref{sec:classic,sec:new} how this framework can be used to
describe ancestry labeling schemes. The framework  relies heavily on the
values defined in \cref{sec:prelims}, namely $a(u), b(u), \ao(u), \bo(u)$.
An illustration of these values is found in  \cref{fig:example} below.
The interval $[\ao(u),\bo(u)]$  can be seen as a slack interval from which
$b(u)$ can be chosen. This  will prove useful  in \cref{sec:new}.

\begin{figure}[htbp]
    \centering
    \includegraphics[width=.8\textwidth]{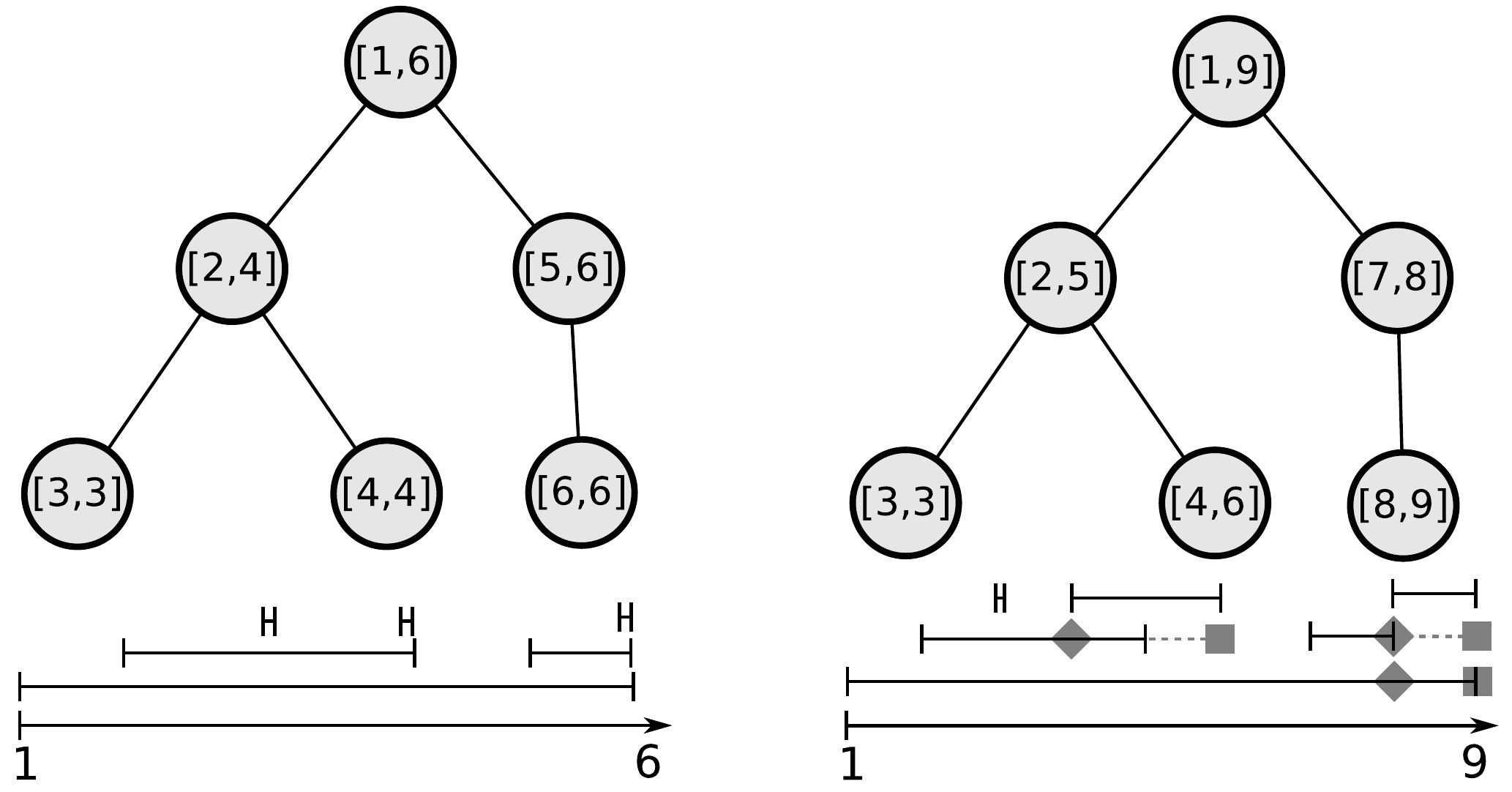}
    \caption{Two examples of left-including interval assignments to a tree.
    Left: a left-including assignment as used for \textproc{Classic} in the
    introduction corresponding to $b(u) = \ao(u)$.
    Right: a different left-including assignment for the same tree.
    For internal nodes where $\ao(u)$ and $\bo(u)$ do not coincide with $b(u)$,
we have marked these by a gray diamond and square respectively.}
    \label{fig:example}
\end{figure}

The following lemmas contain necessary and sufficient conditions for interval assignments satisfying the left inclusion property. 

\begin{lemma}
	\label{classificationNecessary}
    Let $T$ be a rooted tree and $I$ a left-including interval assignment defined on
    $V(T)$.
	Then the following is true:
	\begin{enumerate}
		\item For each $u \in T$, $b(u) \ge \ao(u)$.
		\item For each $u \in T$ and $v \in T_u \backslash \set{u}$ a descendant of $u$, $a(v) > a(u)$.
		\item For each $u \in T$, $[a(u), \bo(u)] = \bigcup_{v \in T_u} I(v) = \bigcup_{v \in T_u} [a(v), b(v)]$
		\item For any two distinct nodes $u,v \in T$ such that $u$ is not an ancestor of $v$ and $v$ is not
		      an ancestor of $u$ the intervals $[a(u), \bo(u)]$ and $[a(v), \bo(v)]$ are disjoint.
	\end{enumerate}
\end{lemma}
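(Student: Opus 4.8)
The plan is to establish the four items in the stated order, using the defining ``iff'' of \cref{defn:valid} as the only tool for the first two, and then bootstrapping the third into the fourth. The first two items are immediate. For item~1, every $v\in T_u$ is a descendant of $u$, so the forward direction of \cref{defn:valid} gives $a(v)\in I(u)=[a(u),b(u)]$, hence $a(v)\le b(u)$; maximizing over $v\in T_u$ yields $\ao(u)\le b(u)$. For item~2, fix a proper descendant $v$ of $u$. The same containment gives $a(u)\le a(v)$. Since $v\ne u$ and $u$ is an ancestor of $v$, node $v$ is not an ancestor of $u$, so the reverse direction of \cref{defn:valid} gives $a(u)\notin I(v)=[a(v),b(v)]$. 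Because $v$ is its own ancestor we have $a(v)\in I(v)$, so $a(v)\le b(v)$; thus $a(u)\le a(v)\le b(v)$ forces $a(u)\ne a(v)$, i.e.\ $a(v)>a(u)$.

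Item~3 is the crux, and I expect the gap-free tiling to be the main obstacle. I would prove it by induction on $|T_u|$. The inclusion $\bigcup_{v\in T_u}I(v)\subseteq[a(u),\bo(u)]$ is routine: item~2 (with $v=u$ handled separately) gives $a(v)\ge a(u)$, while $b(v)\le\bo(u)$ holds by definition of $\bo$. The content is the reverse inclusion. Let $c_1,\dots,c_k$ be the children of $u$, so that, applying the induction hypothesis to each subtree,
$\bigcup_{v\in T_u}I(v)=I(u)\cup\bigcup_{i=1}^{k}\bigcup_{w\in T_{c_i}}I(w)=I(u)\cup\bigcup_{i=1}^{k}[a(c_i),\bo(c_i)]$.
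The key observation is that each child's combined interval attaches to $I(u)$ at its own left endpoint: since $u$ is an ancestor of $c_i$, \cref{defn:valid} gives $a(c_i)\in I(u)=[a(u),b(u)]$. Thus every interval $[a(c_i),\bo(c_i)]$ meets the common interval $I(u)$, so the union is connected and equals $[a(u),\max(b(u),\max_i\bo(c_i))]=[a(u),\bo(u)]$, using the recursive identity $\bo(u)=\max(b(u),\max_i\bo(c_i))$. The base case of a leaf is the trivial identity $[a(u),b(u)]=[a(u),\bo(u)]$.

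Finally, item~4 follows by combining item~3 with disjointness of subtrees. If $u$ and $v$ are incomparable then $T_u\cap T_v=\varnothing$, since a common descendant $w$ would have both $u$ and $v$ on its root path, forcing one of them to be an ancestor of the other. Suppose toward a contradiction that $[a(u),\bo(u)]$ and $[a(v),\bo(v)]$ intersect, and assume $a(u)\le a(v)$ without loss of generality; then $a(v)\in[a(u),\bo(u)]$. By item~3 this places $a(v)\in I(w)$ for some $w\in T_u$, and the reverse direction of \cref{defn:valid} makes $w$ an ancestor of $v$. But $w\in T_u$ means $u$ is an ancestor of $w$, so by transitivity $u$ is an ancestor of $v$, contradicting incomparability. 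Hence the two intervals are disjoint, completing the argument.
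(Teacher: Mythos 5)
Your proof is correct and takes essentially the same route as the paper's: items 1 and 2 are read directly off the definition, item 3 rests on the same key observation that every constituent interval meets $I(u)$ because its left endpoint $a(\cdot)$ lies in $I(u)$ (the paper applies this to all $v\in T_u$ at once rather than inducting over children, but the idea is identical), and item 4 converts an overlap into an ancestry relation via the left-including property and the disjointness of the two subtrees. No gaps.
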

\begin{proof}
We prove that if the set of intervals is left-including each of the four statements must hold. We finish the proof
of each statement with a $\diamond$.

	1.
	Let $u \in T$. For any descendant $v \in T_u$ of $u$ we know that $a(v) \in [a(u), b(u)]$.
	In particular $a(v) \le b(u)$. Since this holds for any $v \in T_u$ we conclude that
	$\ao(v) = \max_{v \in T_u} \set{a(v)} \le b(u)$.
	$\hfill \diamond$

	2.
	Let $u \in T$ and $v \in T_u \backslash \set{u}$ be a descendant of $u$. Since $v$ is a descendant of $u$ we see that
	$a(v) \in [a(u), b(u)]$ and in particular $a(v) \ge a(u)$. Since $u$ is a not a descendant
	of $v$ we know that $a(u) \notin [a(v), b(v)]$ and in particular $a(u) \neq a(v)$. Hence
	$a(v) > a(u)$.
	$\hfill \diamond$

	3.
	Let $u \in T$. For any $v \in T_u$ $a(v) \in I(u)$ and hence $I(u)$ and $I(v)$ must have a
	non-empty intersection. Hence $\bigcup_{v \in T_u} I(v)$
	must be the union of overlapping intervals, and therefore it is an interval. Since $a(v) \ge a(u)$ for
	any $v \in T_u$ by part 2, the minimum value in the interval must be $a(u)$. The maximal value must be
	$\max_{v \in T_u} \set{b(v)}$ which is $\bo(u)$ by definition. Hence the interval must be $[a(u),\bo(u)]$.
	$\hfill \diamond$

	4.
	Let $u,v \in T$ be given as in the statement of the condition. Assume for the sake of contradiction that
	$[a(u), \bo(u)]$ and $[a(v), \bo(v)]$ are not disjoint. By part 3 this is equivalent to stating that
	the following sets are not disjoint:
	\[
		\bigcup_{w \in T_u} [a(w), b(w)],
		\bigcup_{z \in T_v} [a(z), b(z)]
	\]
	Hence there must exist $w \in T_u, z \in T_v$ such that $[a(w), b(w)]$ and $[a(z), b(z)]$ are
	overlapping. Assume wlog that $a(w) \le a(z)$. Since the intervals overlap $b(w) \ge a(z)$ and hence
	$a(z) \in [a(w), b(w)]$ and since the set of intervals is left-including this implies that $w$ is an ancestor
	of $z$. But this cannot be the case if none of $u,v$ is an ancestor of the other. Contradiction. Hence
	the assumption was wrong and the intervals are indeed disjoint.
	$\hfill \diamond$
\end{proof}

\begin{lemma}
	\label{classificationSufficient}
    Let $T$ be a rooted tree and $I$ an interval assignment defined on
    $V(T)$.
	If the following conditions are satisfied, then $I$ is a left-including interval
    assignment.
	\begin{enumerate}
		\item[  i] For each $u \in T$, $b(u) \ge \ao(u)$.
		\item[ ii] For each $u \in T$ and $v \in T$ a child of $u$, $a(v) > a(u)$.
		\item[iii] For any two siblings $u,v \in T$ the intervals $[a(u), \bo(u)]$ and $[a(v), \bo(v)]$ are disjoint.
	\end{enumerate}
\end{lemma}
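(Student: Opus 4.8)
The plan is to verify the biconditional of \cref{defn:valid} in both directions, after first isolating a monotonicity fact and a nesting fact that do most of the work. The one essential ingredient is that condition (ii) propagates along paths: if $u$ is an ancestor of $v$ then $a(u) \le a(v)$, with strict inequality whenever $v$ is a proper descendant of $u$. This is an easy induction on the length of the path from $u$ to $v$, each step invoking (ii) on a parent--child pair. From this, together with the definitions of $\ao$ and $\bo$, I would extract a nesting property: whenever $p$ is an ancestor of $q$, we have $[a(q), \bo(q)] \subseteq [a(p), \bo(p)]$, since $a(q) \ge a(p)$ by monotonicity and $\bo(q) \le \bo(p)$ because $T_q \subseteq T_p$.

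The forward direction is then immediate. If $u$ is an ancestor of $v$, monotonicity gives $a(v) \ge a(u)$, while $a(v) \le \ao(u) \le b(u)$ using the definition of $\ao$ and condition (i); hence $a(v) \in I(u)$. For the reverse direction I would prove the contrapositive, assuming $u$ is not an ancestor of $v$ and splitting into two cases. If $v$ is an ancestor of $u$ it must be a proper one, so strict monotonicity yields $a(v) < a(u)$ and thus $a(v) \notin I(u)$.

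The remaining case, where $u$ and $v$ are incomparable, is the crux and the step I expect to be the main obstacle: the hypotheses only guarantee disjointness for \emph{siblings}, so I must lift this to arbitrary incomparable pairs. To do so I would pass to the lowest common ancestor $w$ of $u$ and $v$; incomparability forces $u$ and $v$ into the subtrees of two distinct children $u'$ and $v'$ of $w$. The nesting property gives $[a(u), \bo(u)] \subseteq [a(u'), \bo(u')]$ and $[a(v), \bo(v)] \subseteq [a(v'), \bo(v')]$, and since $u', v'$ are siblings, condition (iii) makes the two outer intervals disjoint, hence the inner ones as well. Finally, because $b(u) \le \bo(u)$ we have $I(u) \subseteq [a(u), \bo(u)]$, while $a(v) \in [a(v), \bo(v)]$ (as $a(v) \le \ao(v) \le b(v) \le \bo(v)$ using condition (i)); the disjointness then gives $a(v) \notin I(u)$, completing the argument. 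The monotonicity and forward-direction steps are routine, so essentially all the content lives in this lowest-common-ancestor reduction.
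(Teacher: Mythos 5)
Your proposal is correct and follows essentially the same route as the paper's proof: propagate condition (ii) along root-to-node paths to get strict monotonicity of $a$, handle the two comparable cases directly, and reduce the incomparable case to sibling disjointness at the lowest common ancestor. Your packaging of the descent step as a nesting property $[a(q),\bo(q)] \subseteq [a(p),\bo(p)]$ is just a tidier phrasing of the paper's direct computation with $\bo(u) \le \bo(u')$ and $a(v) > a(v')$.
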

\begin{proof}
	We prove that if the three conditions hold the set of intervals are left-including. First we prove that condition
	$ii$ implies condition $2$ from \Cref{classificationNecessary}:

	For each $u \in T$ and $v \in T_u \backslash \set{u}$ a descendant of $u$, $a(v) > a(u)$.

	\noindent
	Let $u \in T, v \in T_u \backslash \set{u}$ and consider the path from $u$ to $v$, say
	$u = v_0 \to v_1 \to \ldots \to v_k = v$ for positive integer $k$. Then $v_i$ is the child of $v_{i-1}$ for
	each $i=1,2,\ldots,k$. Therefore, by condition $ii$:
	\[
		a(u) = a(v_0) < a(v_1) < \ldots < a(v_k) = a(v)
	\]
	Which proves condition $2$.

	Fix distinct $u,v \in T$. We will
	show that $u$ is an ancestor of $v$ if and only if $a(v) \in I(u)$. We do this by considering
	three cases where $u$ is an ancestor of $v$, $v$ is an ancestor of $u$ and $v \neq u$, and none is
	an ancestor of the other respectively.

	\emph{Case 1: $u$ is an ancestor of $v$.} By condition $i$ $b(u) \ge \ao(u)$.
	Since $v \in T_u$ this implies $b(u) \ge a(v)$.
	By condition $2$ $a(v) > a(u)$. Therefore $a(u) < a(v) \le b(u)$ and $a(v) \in I(u)$.

	\emph{Case 2: $v$ is an ancestor of $u$ and $v \neq u$.} By condition $2$ $a(u) > a(v)$. Hence
	$a(v)$ is smaller than any value
	in the interval $[a(u), b(u)]$, and therefore $a(v) \notin I(u)$.

	\emph{Case 3: None of $u,v$ is an ancestor of the other.} By assumption $u,v$ must have a nearest
	common ancestor $w \in T$ which is neither $u$ or $v$. Let $u'$ be the child of $w$ that lies on the path
	from $w$ to $u$. Note that if $u$ is the child of $w$ then $u = u'$. Define $v'$ analogously. By condition $iii$
	$[a(u'), \bo(u')]$ and $[a(v'), \bo(v')]$ are disjoint. Hence either $\bo(u') < a(v')$ or
	$\bo(v') < a(u')$. Since both cases are handled similarly we will show how to handle the case $\bo(u') < a(v')$.
	By condition $2$ $a(v) > a(v')$ and hence $a(v) > \bo(u')$.
	Recall that $\bo(u')$ is the maximum of $b(z)$ where $z \in T_{u'}$. Since $T_u \subset T_{u'}$ this implies
	that $\bo(u) \le \bo(u')$. Hence $a(v) > \bo(u)$ as well and the intervals $[a(v),\bo(v)]$ and $[a(u), \bo(u)]$
	are disjoint. Hece $I(u)$ and $I(v)$ must be disjoint as well and $a(v) \notin I(u)$.
\end{proof}

\subsection{The framework}
We now consider a general approach for creating left-including interval
assignments.
For a node $u \in T$ and a positive integer $t$ we define the procedure
$\Call{Assign}{u, t}$
that assigns intervals to $T_u$ recursively and in particular, assigns
$a(u) = t$. For  pseudocode of  the procedure see~\cref{alg:general}.

\begin{algorithm}
	\caption{Assigning intervals to all nodes in the subtree $T_u$ rooted at
    $u$ ensuring $a(u) = t$.}
	\label{alg:general}
	\begin{algorithmic}[1]
		\Procedure{Assign}{$u, t$}
		\State $(a(u), \ao(u), b(u), \bo(u)) \gets (t,t,t,t)$ \label{alg:initAssign}
        \For{$v\in children(u)$}\label{alg:forloop}
			\State \Call{Assign}{$v, \bo(u)+1$}\label{alg:AssignNext}
			\State $ \left (\ao(u), \bo(u) \right ) \gets \left (\ao(v), \bo(v)\right )$ \label{alg:loopinvar}
		\EndFor
		\State Assign $b(u)$ such that $b(u) \ge \ao(u)$.\label{alg:assignb}
		\State $\bo(u) \gets \max \set{b(u), \bo(u)}$\label{alg:finishbo}
	\EndProcedure
	\end{algorithmic}
\end{algorithm}

\cref{alg:general} provides a general framework for assigning intervals
using a depth-first traversal.
We can use it to design an actual interval assignment by specifying: (1)
the way we choose $b(u)$, and (2) the order in which the children are
traversed. These specifications correspond to \cref{alg:assignb} and
\cref{alg:forloop}, respectively, and determine entirely the way the
intervals are assigned. It may seem counter-intuitive to pick $b(u) >
\bar{a}(u)$, but we will show that doing so in a systematic way, we are
able to describe the interval using fewer bits by limiting the choices for
$b(u)$. In the remainder of this paper, we will see how these
two decisions impact also the label size, and produce our claimed labeling
scheme.

We now show  that any ordering of the children  and any way of choosing $b(u)$ satisfying $b(u)\ge
\ao(u)$ generates a left-including interval assignment.

\begin{lemma}
	\label{correctOaOb}
	Let $T$ be a tree  rooted in  $r$.
    After running \Cref{alg:general} with  $\Call{Assign}{r, 0}$ the values of $\ao(u), \bo(u)$
	are correct, i.e. for all $u \in T$:
	\[
		\ao(u) = \max_{v \in T_u} \set{a(v)},
		\quad
		\bo(u) = \max_{v \in T_u} \set{b(v)}.
	\]
\end{lemma}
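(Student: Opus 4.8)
The plan is to prove the statement by induction on the size of the subtree $T_u$, strengthening the hypothesis so that the induction carries the extra information needed to justify the \emph{overwriting} assignment on line~\ref{alg:loopinvar}. Concretely, I would prove that for every call $\Call{Assign}{u,t}$ the following hold upon return: (a)~$\ao(u)=\max_{v\in T_u}\set{a(v)}$ and $\bo(u)=\max_{v\in T_u}\set{b(v)}$, and (b)~every value assigned inside $T_u$ is at least $t=a(u)$, i.e.\ $a(v),b(v)\ge t$ for all $v\in T_u$. Part~(b) is the crucial addition: it guarantees that the range of values used by one child's subtree lies strictly above the range used by the previously processed children, which is exactly what turns the plain assignment on line~\ref{alg:loopinvar} into a correct running maximum. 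I would also use that $b(v)\ge a(v)$ for every node $v$; this follows from line~\ref{alg:assignb} (which enforces $b(v)\ge\ao(v)$) once part~(a) is available for $T_v$, so that $\ao(v)=\max_{x\in T_v}\set{a(x)}\ge a(v)$. Since the induction is on subtree size, this is available for every proper subtree, which is all that is needed. Its role is that the maximum of all $a$- and $b$-values over any subtree is then attained by a $b$-value and hence equals that subtree's $\bo$.

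For the base case, $u$ is a leaf: the loop on line~\ref{alg:forloop} is empty, line~\ref{alg:assignb} picks some $b(u)\ge\ao(u)=t$, and line~\ref{alg:finishbo} gives $\bo(u)=b(u)$; since $T_u=\set{u}$ both (a) and (b) are immediate. For the inductive step, let $u$ have children $v_1,\dots,v_k$ in the order processed and apply the induction hypothesis to each recursive call $\Call{Assign}{v_i,\cdot}$. The heart of the argument is an inner induction on $i$ establishing the loop invariant that, after the $i$-th iteration, $\bo(u)=\max\set{a(w),b(w) : w\in T_{v_1}\cup\dots\cup T_{v_i}}$ and $\ao(u)=\ao(v_i)=\max\set{a(w):w\in T_{v_1}\cup\dots\cup T_{v_i}}$. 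The step uses that child $v_{i+1}$ is started at $\bo(u)+1$ on line~\ref{alg:AssignNext}, so by part~(b) of the hypothesis every value in $T_{v_{i+1}}$ is at least $\bo(u)+1$ and therefore strictly exceeds every value used in the earlier subtrees; combined with the fact that $\bo(v_{i+1})$ is the maximum of all $a$- and $b$-values over $T_{v_{i+1}}$, this makes $\ao(v_{i+1})$ and $\bo(v_{i+1})$ the new running maxima, so overwriting on line~\ref{alg:loopinvar} preserves the invariant.

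It remains to fold in the node $u$ itself together with the final two lines. After the loop the invariant gives $\ao(u)=\max\set{a(w)}$ over $\bigcup_i T_{v_i}$; since $a(u)=t$ is a lower bound for all these values by part~(b), adjoining $u$ does not change the maximum, so $\ao(u)=\max_{v\in T_u}\set{a(v)}$, and lines~\ref{alg:assignb} and~\ref{alg:finishbo} leave $\ao(u)$ untouched. For $\bo$, line~\ref{alg:finishbo} sets $\bo(u)=\max\set{b(u),\bo(u)}$, where the pre-loop value of $\bo(u)$ equals the largest $b$-value over all child subtrees; this is exactly $\max_{v\in T_u}\set{b(v)}$, and part~(b) for the whole of $T_u$ then follows from $b(u)\ge\ao(u)\ge t$. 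I expect the main obstacle to be precisely the justification that line~\ref{alg:loopinvar} may overwrite rather than take a maximum: one must argue carefully that the children are assigned pairwise-disjoint, strictly increasing ranges of integers, which is where the strengthened lower-bound invariant~(b) and the observation $b\ge a$ do the real work. Everything else is bookkeeping over the recursion.
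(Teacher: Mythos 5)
Your proof is correct and follows the same route as the paper: induction on $\abs{T_u}$, with an inner invariant over the children justifying that \cref{alg:loopinvar} tracks a running maximum. You are in fact more careful than the paper's own terse proof, which asserts without argument that the overwrite on \cref{alg:loopinvar} dominates all earlier children's values; your strengthened hypothesis~(b) (all values in $T_u$ are at least $t$), combined with $b(v)\ge a(v)$, supplies exactly the missing justification that successive children occupy strictly increasing ranges.
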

\begin{proof}
	We  prove the claim by induction on the size of the subtree $\abs{T_u}$,
    and prove that the invariant holds when the function call
    $\Call{assign}{u,t}$  is terminated.This is sufficient, since none of the involved variables are altered afterwards.

	As   base case $\abs{T_u} = 1$ and  $u$ is a leaf and the claim holds trivially.
	Assume next that the claim holds for  $\abs{T_u} \le m$ for some  integer $m \ge 1$,
	and let $u \in T$ be of size $\abs{T_u} = m+1$, and  $v_1,\ldots,v_k$ be the children of $u$ in the order they are processed on \Cref{alg:forloop}.
	First, note that the size of each $\abs{T_{v_1}} \dots \abs{T_{v_k}} \le m$.
	It follows that $\ao(v_i)$ and $\bo(v_i)$ are correct for the nodes $v_i$ where $1 \leq i \leq v_k$.
	Line 5 guarantees that $\bo(u)$ is at least as large as the largest $\bo(v_i)$, and that $\ao(u)$ is at least as large as the largest $\ao(v_i)$.
\end{proof}

The following Lemma is useful for showing several properties in the framework.
\begin{lemma}
	\label{BasicProperty}
Let $u$ be a node in a tree $T$ with children  $v_1 \dots v_k$. After running \Cref{alg:general} with parameters $\Call{Assign}{r,0}$ where $v_1 \dots v_k$ are processed in that order,  the following properties hold:
\begin{enumerate}
\item{$\bo(u)- a(u)+1 = \left ( \sum_{i=1}^k \bo(v_i) - a(v_{i}) + 1 \right )+1$.}
\item{$\ao(u)- a(u)+1  = \ao(v_k)- a(v_k)+ \left (\sum_{i=1}^{k-1} \bo(v_i) - a(v_i) +1 \right )+1$.}
\end{enumerate}

\end{lemma}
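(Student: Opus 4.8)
The plan is to derive both identities by reading two recurrences directly off \Cref{alg:general} and then telescoping; beyond \Cref{correctOaOb} no induction is needed. First I would record how each child's starting point $a(v_i)$ is fixed. Immediately after Line 2 we have $\bo(u)=a(u)$, so the first call on Line 4 sets $a(v_1)=\bo(u)+1=a(u)+1$. For $i\ge 2$, Line 5 of the previous iteration has already overwritten $\bo(u)$ with $\bo(v_{i-1})$, so Line 4 gives $a(v_i)=\bo(v_{i-1})+1$. After the loop, Line 5 leaves $\ao(u)=\ao(v_k)$ and $\bo(u)=\bo(v_k)$, and Lines 6--7 leave $\ao(u)$ untouched while (for the choices of $b(u)$ we instantiate, where $b(u)\le\bo(v_k)$) keeping $\bo(u)=\bo(v_k)$. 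These four facts are the only input to the computation.

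For the first identity I would rewrite each summand with the recurrences above. Since $a(v_1)-1=a(u)$ and $a(v_i)-1=\bo(v_{i-1})$ for $i\ge 2$, the term $\bo(v_i)-a(v_i)+1$ equals $\bo(v_1)-a(u)$ when $i=1$ and $\bo(v_i)-\bo(v_{i-1})$ when $i\ge 2$, so the sum telescopes:
\[
\sum_{i=1}^{k}\bigl(\bo(v_i)-a(v_i)+1\bigr)=\bigl(\bo(v_1)-a(u)\bigr)+\sum_{i=2}^{k}\bigl(\bo(v_i)-\bo(v_{i-1})\bigr)=\bo(v_k)-a(u).
\]
Adding $1$ and substituting $\bo(u)=\bo(v_k)$ gives $\bo(u)-a(u)+1$, which is the first claim.

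For the second identity the governing picture is that the integers from $a(u)$ to $\ao(u)=\ao(v_k)$ split into the single value $a(u)$, the full child-blocks $[a(v_i),\bo(v_i)]$ for $i<k$, and a truncated final block $[a(v_k),\ao(v_k)]$. I would reuse the telescoping above on the first $k-1$ children to get $\sum_{i=1}^{k-1}\bigl(\bo(v_i)-a(v_i)+1\bigr)=\bo(v_{k-1})-a(u)$, and then handle the last child separately: because $a(v_k)=\bo(v_{k-1})+1$, its contribution $\ao(v_k)-a(v_k)+1$ collapses to $\ao(v_k)-\bo(v_{k-1})$. Adding the single value $a(u)$, the telescoped middle, and this final block gives $1+\bigl(\bo(v_{k-1})-a(u)\bigr)+\bigl(\ao(v_k)-\bo(v_{k-1})\bigr)=\ao(v_k)-a(u)+1=\ao(u)-a(u)+1$, which is the second claim.

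I expect the main obstacle to be the bookkeeping of the additive constants rather than the telescoping itself: the leading $+1$ produced by the singleton $a(u)$, and the $+1$ carried inside the last child's term $\ao(v_k)-a(v_k)+1$, must both be tracked exactly, and I would verify them against the small case $k=2$ before trusting the general expression. A second, more structural point I would state explicitly is that Line 7 must not raise $\bo(u)$ above $\bo(v_k)$ for the assignments we use; this is what licenses phrasing the first identity in terms of $\bo(u)$ instead of $\max\set{b(u),\bo(v_k)}$, and it holds because those schemes always pick $b(u)\le\bo(v_k)$ at an internal node.
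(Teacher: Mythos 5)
Your proof is correct and essentially identical to the paper's: both read off the recurrences $a(v_1)=a(u)+1$, $a(v_{i+1})=\bo(v_i)+1$, $\bo(u)=\bo(v_k)$, $\ao(u)=\ao(v_k)$ from the algorithm and then telescope (the paper expands $\bo(u)-a(u)+1$ into the sum, you collapse the sum --- the same computation), with the second identity handled by the same argument. One caveat on your closing remark: the claim that the instantiated schemes always pick $b(u)\le\bo(v_k)$ is not true for \textproc{Assign-New}, since rounding $\ao(u)-a(u)+1$ up to an element of $S$ can overshoot $\bo(v_k)$; the paper's own proof has the same blind spot (it simply asserts $\bo(v_k)=\bo(u)$, ignoring Line 7), and in the proof of \Cref{newInvariant} it in fact applies identity 1 with $\bo(v_k)$ in place of $\bo(u)$ and bounds $b(u)$ separately --- which is the form of the identity that both your argument and the paper's actually establish.
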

\begin{proof}
By the definition of \textproc{Assign} we see that for
	all $i = 1,\ldots,k-1$, $a(v_{i+1}) = 1 + \bo(v_i)$. Furthermore $a(v_1) = a(u) + 1$ and
	$\bo(v_k) = \bo(u)$. Hence:
	\begin{align*}
        \bo(u) - a(u) + 1 &=
		\bo(v_k) - a(v_1) + 2 \\ & =
		\left ( \sum_{i=2}^k \bo(v_i) - \bo(v_{i-1}) \right )
		+ \bo(v_1) - a(v_1) + 2
		\\
		& =
		\left ( \sum_{i=1}^k \bo(v_i) - a(v_{i}) + 1 \right )
		+ 1. \\
	\end{align*}
	The second equality follows by the same line of argument.
\qed
\end{proof}
\begin{theorem}
	\label{correctAlgo}
	Let $T$ be a  tree rooted in  $r$.
    After running \Cref{alg:general} with parameters $\Call{Assign}{r,0}$ the set of intervals produced are left-including.
\end{theorem}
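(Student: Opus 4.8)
The plan is to verify the three sufficient conditions of \Cref{classificationSufficient}; once conditions~(i)--(iii) are shown to hold for the assignment produced by \textproc{Assign}, left-inclusion follows immediately. Throughout I would rely on \Cref{correctOaOb} to guarantee that the stored values $\ao(u)$ and $\bo(u)$ equal $\max_{v\in T_u} a(v)$ and $\max_{v\in T_u} b(v)$ respectively, so that the conditions can be checked against the genuine interval data rather than against intermediate loop values. I would also extract from the definition of \textproc{Assign} (equivalently, from the opening observations in the proof of \Cref{BasicProperty}) the two structural facts that drive everything: for a node $u$ with children $v_1,\dots,v_k$ processed in that order, the first child satisfies $a(v_1) = a(u)+1$, and each subsequent child satisfies $a(v_{i+1}) = \bo(v_i)+1$.

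Condition~(i) requires no work: the step of \cref{alg:general} that fixes $b(u)$ does so precisely under the constraint $b(u)\ge\ao(u)$, so it holds by construction. For condition~(ii), fix a child $v$ of $u$. At the moment \textproc{Assign} recurses into $v$ it sets $a(v) = \bo(u)+1$, where $\bo(u)$ is the value accumulated so far; since $\bo(u)$ is initialised to $a(u)$ and only ever increases, we have $a(v) = \bo(u)+1 \ge a(u)+1 > a(u)$, which is exactly condition~(ii).

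The only step with any content is condition~(iii), the disjointness of the intervals $[a(u),\bo(u)]$ and $[a(v),\bo(v)]$ attached to two siblings $u,v$. Let $w$ be their common parent, with children $v_1,\dots,v_k$ processed in that order, and say $u = v_i$ and $v = v_j$ with $i<j$. I would combine the structural facts above with the strict monotonicity $a(v_1) < a(v_2) < \dots < a(v_k)$ of the starting points: since $a(v_{i+1}) = \bo(v_i)+1$, we have $a(v_j) \ge a(v_{i+1}) = \bo(v_i)+1 > \bo(v_i)$. Thus $[a(v_i),\bo(v_i)]$ ends strictly before $[a(v_j),\bo(v_j)]$ begins, so the two are disjoint, which is condition~(iii).

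I do not anticipate a genuinely hard step; the proof is a bookkeeping verification. The one place to be careful is timing: when the recursion reads $\bo(u)$ to set $a(v_{i+1})$, that value must already be the final, correct $\bo(v_i)$ rather than a stale partial value. This is precisely what \Cref{correctOaOb} secures (via its induction on $\abs{T_u}$, the recursive call on $v_i$ having fully returned before $v_{i+1}$ is processed), so I would invoke it explicitly here to justify that the quantities appearing in conditions~(i)--(iii) are the ones \Cref{classificationSufficient} actually refers to. With the three conditions established, \Cref{classificationSufficient} yields that the produced assignment is left-including, completing the proof.
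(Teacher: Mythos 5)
Your proposal is correct and follows essentially the same route as the paper: both verify conditions (i)--(iii) of \Cref{classificationSufficient} using the structural facts $a(v_1)=a(u)+1$, $a(v_{i+1})=\bo(v_i)+1$, and $\bo(w)\ge b(w)\ge a(w)$. Your explicit appeal to \Cref{correctOaOb} to justify that the recursion reads the finalized $\bo(v_i)$ is a minor extra precaution the paper leaves implicit, but it does not change the argument.
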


\begin{proof}
Consider any node $u\in T$ and a call $\Call{Assign}{u,t}$. We will prove each of the
conditions of \cref{classificationSufficient}, which implies the theorem.
\begin{itemize}
    \item[  i] This condition is trivially satisfied by \cref{alg:assignb}.
    \item[iii] First, observe that any interval assigned to a node $w$ by a
        call to $\Call{Assign}{v,t}$ has $a(w)\ge t$, and by $i$ it has
        $\bo(w) \ge b(w) \ge a(w)$. Let $v_1,\ldots,v_k$ be the children in the order of
        the for loop in \cref{alg:forloop}. By
        \cref{alg:initAssign,alg:AssignNext,alg:loopinvar} we have $a(v_1) =
        a(u) + 1, a(v_2) = \bo(v_1) + 1, a(v_3) = \bo(v_2) + 1,\ldots, a(v_k) =
        \bo(v_{k-1}) + 1$, thus the condition is satisfied.
    \item[ ii] The first child $v$ of $u$ has $a(v) = t + 1 = a(u) + 1$. By the
        same line of argument as in $iii$ we see that all other children
        $w$ of $u$ must have $a(w) > a(v) = a(u) + 1$.\qed
\end{itemize}

\end{proof}

\section{The classic ancestry labeling scheme}
\label{sec:classic}
To get acquainted with the framework of \Cref{sec:intervals}, we use it to redefine \textproc{Classic}, the labeling scheme introduced in \Cref{Sec:Intro}.

Let $T$ be a tree rooted in $r$.
We first modify the function \textproc{Assign} to create
\textproc{Assign-Classic} such that the intervals $I(u) = [a(u), b(u)]$
correspond to the intervals of the \textproc{Classic} algorithm described in
the introduction. To do this we set $b(u) = \ao(u)$ in \cref{alg:assignb} and
traverse the children in any order in \cref{alg:forloop}.
We note that there is a clear distinction between an algorithm such as
\textproc{Assign-Classic} and an encoder. This distinction will be more clear
in \cref{sec:new}. We will need the following lemma to describe the encoder.

\begin{lemma}
	\label{classicInvariant}
    After $\Call{Assign-Classic}{u,t}$ is called the following invariant is
    true:
	\[
		\bo(u) - a(u) + 1 = \abs{T_u}
	\]
\end{lemma}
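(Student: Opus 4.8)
The claim $\bo(u)-a(u)+1=\abs{T_u}$ says the interval $[a(u),\bo(u)]$ exactly spans $\abs{T_u}$ integer positions, which is precisely the statement that \textproc{Assign-Classic} wastes no slack: the DFS intervals tile $\{0,1,\ldots,n-1\}$ with no gaps. The plan is to prove this by induction on $\abs{T_u}$, exactly mirroring the proof of \Cref{correctOaOb}, and to feed the recursive structure through the identity already established in part 1 of \Cref{BasicProperty}.

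\medskip
\noindent\textbf{Plan.} First I would dispose of the base case $\abs{T_u}=1$: here $u$ is a leaf, so there are no children, and after \cref{alg:initAssign} we have $a(u)=\ao(u)=t$, whence \cref{alg:assignb} sets $b(u)=\ao(u)=t$ and \cref{alg:finishbo} leaves $\bo(u)=t$. Thus $\bo(u)-a(u)+1=1=\abs{T_u}$. For the inductive step, let $u$ have children $v_1,\ldots,v_k$ in processing order, and assume the invariant holds for every subtree of size at most $m$, with $\abs{T_u}=m+1$. Each $\abs{T_{v_i}}\le m$, so by the induction hypothesis $\bo(v_i)-a(v_i)+1=\abs{T_{v_i}}$ for every $i$. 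Now I would simply invoke part 1 of \Cref{BasicProperty}, which gives
\[
	\bo(u)-a(u)+1=\left(\sum_{i=1}^k \bo(v_i)-a(v_i)+1\right)+1.
\]
Substituting the induction hypothesis turns the summand into $\abs{T_{v_i}}$, so the right-hand side becomes $\left(\sum_{i=1}^k \abs{T_{v_i}}\right)+1$. Since the subtrees $T_{v_1},\ldots,T_{v_k}$ are exactly the components of $T_u\setminus\{u\}$, their sizes sum to $\abs{T_u}-1$, and the extra $+1$ accounts for $u$ itself, giving $\bo(u)-a(u)+1=\abs{T_u}$ as required.

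\medskip
\noindent\textbf{Main obstacle.} The genuine content is almost entirely offloaded onto \Cref{BasicProperty}, so the only subtlety is being careful that \Cref{BasicProperty} applies to \textproc{Assign-Classic} and not merely to the abstract \textproc{Assign}: this is fine because \textproc{Assign-Classic} is an instance of the framework with \cref{alg:assignb} specialized to $b(u)=\ao(u)$, and \Cref{BasicProperty} was proved for \emph{any} choice of $b(u)$ respecting $b(u)\ge\ao(u)$. The one point deserving a word of care is the decomposition $\abs{T_u}=1+\sum_{i=1}^k\abs{T_{v_i}}$, which holds because the subtrees rooted at the distinct children of $u$ partition $T_u\setminus\{u\}$; this is immediate from the definition of $T_u$ as the set of descendants of $u$. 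No step requires the specific value $b(u)=\ao(u)$ beyond knowing $b(u)\ge\ao(u)$, so the argument is robust, and I would expect the writeup to be three or four lines once \Cref{BasicProperty} is cited.
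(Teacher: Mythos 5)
Your proposal is correct and follows essentially the same route as the paper: induction on $\abs{T_u}$ with the base case for leaves, then an application of part 1 of \Cref{BasicProperty} together with the induction hypothesis and the decomposition $\abs{T_u} = 1 + \sum_{i=1}^k \abs{T_{v_i}}$. The extra remark that \Cref{BasicProperty} applies to \textproc{Assign-Classic} as an instance of the framework is a reasonable point of care but does not change the argument.
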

\begin{proof}
	We prove the claim by induction on $\abs{T_u}$.
	When $\abs{T_u} = 1$ $u$ is a leaf and hence $\bo(u) = a(u) = t$
	and the claim holds.

	Let $\abs{T_u} = m > 1$ and assume that the claim holds for all nodes with subtree size $< m$. Let
    $v_1,\ldots,v_k$ be the children of $u$. By \cref{BasicProperty} and the
    induction hypothesis we have:
    \begin{align*}
        \bo(u) - a(u) + 1
		& =
		\left ( \sum_{i=1}^k \bo(v_i) - a(v_{i}) + 1 \right )
		+ 1 \\
		& =
	    \left ( \sum_{i=1}^k \abs{T_{v_i}} \right )
		+ 1 =
		\abs{T_u}.
	\end{align*}
	This completes the induction. \qed
\end{proof}

\textbf{Description of the encoder:}
Let $T$ be an $n$-node tree rooted in $r$. We first invoke a call to
$\Call{Assign-Classic}{r,0}$. By \Cref{classicInvariant} we have $\bo(r) - a(r)
+ 1 = n$ and this implies $0\le a(u),b(u)\le n-1$ for every $u\in T$.
Let $x_u$
and $y_u$ be the encoding of $a(u)$ and $b(u)$ using exactly\footnote{This can be accomplished by padding with zeros if necessary.}
$\ceil{\lg n}$ bits respectively. We set the label of $u$ to be the
concatenation of the two bitstrings, i.e.~$\ell(u) = x_u\circ y_u$.

\textbf{Description of the decoder:} Let $\ell(u)$ and $\ell(v)$ be the labels of the nodes $u$ and $v$ in a tree $T$.
 By the definition of the encoder, the labels have the same size and it is $2z$
 for some integer $z \ge 1$. Let
$\ell(u) = x_u \circ y_u$ where $x_u$ and $y_u$ are the first and last $z$ bits of $\ell(u)$ respectively. Let
$a_u$ and $b_u$ the integers from $[2^z]$ corresponding to
the bit strings $x_u$ and $y_u$ respectively. We define $a_v$ and $b_v$ analogously. The decoder responds
\texttt{True}, i.e. that $u$ is the ancestor of $v$, iff $a_v \in [a_u, b_u]$.

The correctness of the labeling scheme follows from \Cref{correctAlgo} and the description of the decoder.

\section{An approximation-based approach}
\label{sec:new}

In this section we present the main result of this paper:
\begin{theorem}\label{thm:result}
 There exist  an ancestry labeling scheme of size $\ceil{\lg n} + 2\ceil{\lg \lg n} + 3$.
\end{theorem}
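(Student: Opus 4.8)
The plan is to instantiate \cref{alg:general} by fixing its two remaining degrees of freedom. In \cref{alg:assignb} I would no longer take $b(u)=\ao(u)$ as \textproc{Assign-Classic} does; instead I would round the left-span $\ao(u)-a(u)+1$ up to the nearest number of the form $\mu\cdot 2^{e}$, where $e=\floor{\lg(\ao(u)-a(u)+1)}$ and $\mu$ is an $(m+1)$-bit integer for a parameter $m$ fixed later, and set $b(u)=a(u)+\mu\cdot 2^{e}-1$. In \cref{alg:forloop} I would process the children in increasing order of subtree size, so that the largest (heavy) child is handled last. Since rounding only increases $b(u)$, the requirement $b(u)\ge\ao(u)$ of \cref{alg:assignb} still holds, so \cref{correctAlgo} immediately tells us the assignment is left-including, and the test ``$a(v)\in[a(u),b(u)]$'' decides ancestry exactly as for \textproc{Classic} in \cref{sec:classic}.

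Handed correctness by the framework, the theorem reduces to quantitative estimates about the run $\Call{Assign}{r,0}$. Each label stores $a(u)$ together with the pair $(e,\mu)$, a total of $\ceil{\lg N}+\ceil{\lg\lg N}+m+O(1)$ bits, where $N=\bo(r)+1$ is the size of the range used; padding every field to a fixed width lets the decoder split the label, recompute $b(u)=a(u)+\mu\cdot 2^{e}-1$, and answer in constant time on the word RAM. Everything therefore rests on bounding $N$. For \textproc{Classic}, \cref{classicInvariant} gives the exact span $\bo(u)-a(u)+1=\abs{T_u}$ and hence $N=n$; here the rounding in \cref{alg:assignb} inflates spans, and \cref{BasicProperty} controls by how much: the full span obeys $\bo(u)-a(u)+1=1+\sum_i(\bo(v_i)-a(v_i)+1)$, while the left-span $\ao(u)-a(u)+1$, which is the quantity we round, obeys the companion identity of \cref{BasicProperty} in which the last-processed child contributes only its own left-span and every earlier child contributes its full span. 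Processing the heavy child last therefore keeps the rounded quantity close to $\abs{T_u}$.

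The main obstacle is to turn this into a clean bound on $N=S(r)$, where I write $S(u)=\bo(u)-a(u)+1$. A single rounding inflates a span by a factor at most $1+2^{-m}$, and the naive fear is that these factors compound along every root-to-leaf path, giving the useless estimate $n(1+2^{-m})^{\text{depth}}$. Processing the heavy child last is exactly what prevents this: the slack created at $u$ is inserted before its light children only, so it can be charged to light edges, of which any downward path from $u$ contains at most $\lg\abs{T_u}$ (each light edge at least halves the subtree size). I would make this precise as an induction establishing $S(u)\le\abs{T_u}\,(1+2^{-m})^{\ell(u)}$, where $\ell(u)\le\lg\abs{T_u}$ is the largest number of light edges on a path from $u$ to a leaf. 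Getting the bookkeeping of the left-span recursion of \cref{BasicProperty} to respect this charging is the delicate point, and I expect it to be where most of the work lies.

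It then remains to choose $m$ and to tidy the constants. Taking $m=\ceil{\lg\lg n}+O(1)$ makes $2^{-m}\lg n=O(1)$, so $N\le n(1+2^{-m})^{\lg n}=O(n)$ and $\ceil{\lg N}=\ceil{\lg n}+O(1)$, while the mantissa costs $m=\ceil{\lg\lg n}+O(1)$ bits and the exponent costs $\ceil{\lg\lg N}=\ceil{\lg\lg n}+O(1)$ bits; the three contributions sum to $\ceil{\lg n}+2\ceil{\lg\lg n}+O(1)$. This is precisely the trade-off promised in the introduction: a smaller $m$ is a coarser approximation that enlarges $N$ and hence the first term, whereas a larger $m$ shrinks $N$ but is paid directly in the mantissa, and $m\approx\lg\lg n$ is the balance point. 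The final task, which I expect to be routine but constant-sensitive, is to pin down the rounding convention, the exact field widths, and the value of $m$ so that the additive constant is exactly $3$ rather than merely $O(1)$. As a sanity check, repeating the estimate on a tree of depth $\delta$ replaces $\lg n$ by $\delta$, so $m=O(\lg\delta)$ suffices and, for constant depth, the mantissa collapses to $O(1)$ bits, recovering the $\ceil{\lg n}+\ceil{\lg\lg n}+O(1)$ bound mentioned in the introduction.
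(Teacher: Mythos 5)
This is essentially the paper's proof: approximate the left-span with relative error $1+\eps$ for $\eps\approx 1/\lg n$, process the heaviest child last, and exploit that in the left-span recursion of \cref{BasicProperty} the last child contributes only its own left-span, so the roundings compound only across light edges; your floating-point pair $(e,\mu)$ is just a different encoding of the paper's single index $k$ into the geometric grid \cref{eq:defS}, at the same $2\lg\lg n+O(1)$ cost. Two small corrections. First, as written the grid $\mu\cdot 2^{e}$ with $e=\floor{\lg(\ao(u)-a(u)+1)}$ has granularity about $\ao(u)-a(u)+1$ itself, giving relative error up to $1$ rather than $2^{-m}$; you want $\mu\cdot 2^{e-m}$ with $\mu\in[2^m,2^{m+1})$. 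Second, the single invariant $S(u)\le|T_u|(1+2^{-m})^{\ell(u)}$ does not close on its own, because the rounding at $u$ is applied to the left-span: you must carry the paired invariants $\ao(u)-a(u)+1\le|T_u|(1+2^{-m})^{\ell(u)}$ and $\bo(u)-a(u)+1\le|T_u|(1+2^{-m})^{\ell(u)+1}$, where the one-factor gap between them is exactly what absorbs the rounding at $u$ and stops it from compounding along heavy paths. This is precisely \cref{newInvariant} in the paper (with $\floor{\lg|T_u|}$ in place of your $\ell(u)$; either works since both are at most $\lg n$), and with it the remaining steps --- $a(u)\le 2n$, the choice $m=\ceil{\lg\lg n}$, and the field widths yielding the additive constant $3$ --- go through as you sketch.
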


To prove this theorem, we   use the framework introduced in~\Cref{sec:intervals}.
The barrier in reducing the size of the \textproc{Classic} labeling scheme is
that the number of different intervals that can be assigned to a node is
$\Theta(n^2)$. It is impossible to encode so many different intervals without using at least
$2\lg n - O(1)$ bits. The challenge is therefore to find a smaller set of intervals $I(u) = [a(u), b(u)]$
to assign to the nodes. First, note that \cref{classificationNecessary}
points 2 and 4 imply that any two nodes $u,v$ must have $a(u)\ne a(v)$.
By considering the  $n$ node tree $T$ rooted in  $r$ where $r$  has $n-1$ children, we also see that there must be at least $n-1$ different
values of $b(u)$ (by \cref{classificationNecessary} point 4).
One might think that this implies the need for $\Omega(n^2)$ different intervals. This is, however, not the case.
We  consider a family of intervals, such that $a(u) = O(n)$ and the size of
each interval, $b(u)-a(u)+1$, comes from a much smaller set, $S$. Since
there are $O(n \abs{S})$ such intervals we are able to encode them using $\lg n+ \lg \abs{S} + O(1)$ bits.

We  now present a modification of \textproc{Assign} called
\textproc{Assign-New}. Calling $\Call{Assign-New}{r,0}$ on an $n$-node
tree $T$ with root $r$ will result in each $a(u)\in [2n]$ and $b(u)\in S$,
where $S$ is given by:
\begin{align}
	\label{eq:defS}
	S = \set{
			\floor{ \left ( 1 + \eps \right )^k }
		\mid
			k \in \left [ 4\ceil{\lg n}^2 \right ]
		}\ ,
\end{align}
where $\eps$ is the unique solution to the equation $\lg(1+\eps) = \left(
\ceil{\lg n} \right)^{-1}$. First, we  examine some properties of $S$:
\begin{lemma}
	\label{approxS}
	Let $S$ be defined as in \Cref{eq:defS}. For every $m \in \set{1,2,\ldots,2n}$ there exists $s \in S$ such that:
	\[
		m \le s < m(1+\eps)\ .
	\]
	Furthermore, $s = \floor{(1+\eps)^k}$ for some $k\in \left[ 4\ceil{\lg
	n}^2\right]$, and both $s$ and $k$ can be computed in $O(1)$ time.
\end{lemma}
\begin{proof}
	Fix $m \in \set{1,2,\ldots,2n}$.
	Let $k$ be the largest integer such that $\left ( 1 + \eps \right )^{k-1} <
    m$. Equivalently, $k$ is the largest integer such that:
	\[
		k-1 < \frac{\lg m}{\lg \left ( 1 + \eps \right )} =
		\left ( \lg m \right ) \cdot \ceil{\lg n}.
	\]
	In other words we choose $k$ as $\ceil{\left ( \lg m \right ) \cdot \ceil{\lg n}}$
	and note that $k$ is computed in $O(1)$ time.
	Since $\lg m \le \lg (2n) \le 2\lg n$: 
	\[
		k \le \ceil{2 (\lg n ) \cdot \ceil{\lg n}} \le 2 \ceil{\lg n}^2
		< 4 \ceil{\lg n}^2.
	\]
    By setting $s = \floor{(1+\eps)^k}$ we have $s \in S$. By the definition of $k$ we
    see that $(1+\eps)^k \ge m$ and thus also $s \ge m$.
    Similarly:
	\[
		m(1+\eps) >
		\left ( 1 + \eps \right )^{k-1} \cdot (1+\eps) =
		(1 + \eps)^k \ge s.
	\]
	This proves that $s \in S$ satisfies the desired requirement. Furthermore $s$ can be computed in $O(1)$
	time by noting that:
	\[
		s =
		\floor{(1+\eps)^k} =
		\floor{2^{\lg(1+\eps)k}} =
		\floor{2^{\ceil{\lg n}^{-1} \cdot k}}.
	\]
	\qed
\end{proof}

We now  define \textproc{Assign-New} by modifying
\textproc{Assign} in the following two ways.
First, we specify the order in which the children are traversed in
\cref{alg:forloop}. This is done in non-decreasing order of their subtree
size, i.e.~we iterate $v_1,\ldots,v_k$, where
$|T_{v_1}|\le\ldots\le|T_{v_k}|$.
Second, we choose $b(u)$ in \Cref{alg:assignb} as the smallest value, such that $b(u)\ge\ao(u)$ and $b(u)-a(u)+1 \in S$. This is
done by using \cref{approxS} with $m = \ao(u) - a(u) + 1$ and setting
$b(u) = a(u) + s - 1$. In order to do this we must have $m\le 2n$. To do this,
we show the following lemma corresponding to \cref{classicInvariant} in
\cref{sec:classic}.

\begin{lemma}
	\label{newInvariant}
    After $\Call{Assign-New}{u,t}$ is called the following invariants holds:
	\begin{align}
		\label{eq:aoClaim}
		\ao(u) - a(u) + 1 & \le \abs{T_u}(1 + \eps)^{\floor{\lg \abs{T_u}}}
		\\
		\label{eq:boClaim}
		\bo(u) - a(u) + 1 & \le \abs{T_u}(1 + \eps)^{\floor{\lg \abs{T_u}}+1}
	\end{align}
\end{lemma}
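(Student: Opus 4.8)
The plan is to prove both inequalities simultaneously by strong induction on $\abs{T_u}$, combining the two structural identities of \cref{BasicProperty} with the approximation guarantee of \cref{approxS}. For the base case $\abs{T_u}=1$ the node $u$ is a leaf, so $\ao(u)=a(u)$, and since $1=\floor{(1+\eps)^0}\in S$ the rule for choosing $b(u)$ in \textproc{Assign-New} forces $b(u)=a(u)$; both bounds then read $1\le(1+\eps)^0$ and $1\le(1+\eps)^1$ and hold. For the inductive step let $v_1,\ldots,v_k$ be the children of $u$ in the order processed, so that with $n_i:=\abs{T_{v_i}}$ we have $n_1\le\cdots\le n_k$; write $N:=\abs{T_u}$, so $\sum_{i=1}^k n_i=N-1$. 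I would first establish \eqref{eq:aoClaim} and then use it, together with the choice of $b(u)$, to establish \eqref{eq:boClaim}.

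For \eqref{eq:aoClaim} I would expand $\ao(u)-a(u)+1$ via \cref{BasicProperty}, writing it as the last child's contribution $\ao(v_k)-a(v_k)+1$ plus $\sum_{i=1}^{k-1}\big(\bo(v_i)-a(v_i)+1\big)$ (up to the additive constant tracked there). Applying the induction hypothesis \eqref{eq:aoClaim} to $v_k$ and the weaker hypothesis \eqref{eq:boClaim} to each $v_i$ with $i<k$ bounds this by $n_k(1+\eps)^{\floor{\lg n_k}}+\sum_{i=1}^{k-1}n_i(1+\eps)^{\floor{\lg n_i}+1}$. The crucial point is the ordering of the children by subtree size: for every non-last child (here $k\ge 2$; for $k=1$ there are none) we have $2n_i\le n_i+n_k\le N-1<N$, hence $n_i\le N/2$ and therefore $\floor{\lg n_i}+1\le\floor{\lg N}$, whereas $\floor{\lg n_k}\le\floor{\lg N}$ holds trivially. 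Replacing every exponent by the common $\floor{\lg N}$ collapses the sum to $(\sum_i n_i)(1+\eps)^{\floor{\lg N}}=(N-1)(1+\eps)^{\floor{\lg N}}$, and the stray additive constant is absorbed because $1\le(1+\eps)^{\floor{\lg N}}$, giving $\ao(u)-a(u)+1\le N(1+\eps)^{\floor{\lg N}}=\abs{T_u}(1+\eps)^{\floor{\lg\abs{T_u}}}$.

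For \eqref{eq:boClaim} observe that after the loop $\bo(u)=\max\set{b(u),\bo(v_k)}$, so it suffices to bound $b(u)-a(u)+1$ and $\bo(v_k)-a(u)+1$ separately. For the first, \eqref{eq:aoClaim} just proved for $u$ gives $\ao(u)-a(u)+1\le N(1+\eps)^{\floor{\lg N}}\le n\cdot 2^{(\lg n)/\ceil{\lg n}}\le 2n$, which is precisely what makes \cref{approxS} applicable with $m=\ao(u)-a(u)+1$; it then yields $b(u)-a(u)+1<(1+\eps)\big(\ao(u)-a(u)+1\big)\le N(1+\eps)^{\floor{\lg N}+1}$. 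For the second, I expand $\bo(v_k)-a(u)+1=\sum_{i=1}^k\big(\bo(v_i)-a(v_i)+1\big)+1$ by \cref{BasicProperty}, bound each summand by the induction hypothesis \eqref{eq:boClaim}, and use $\floor{\lg n_i}\le\floor{\lg N}$ for every child (no ordering required here) to obtain at most $(N-1)(1+\eps)^{\floor{\lg N}+1}+1\le N(1+\eps)^{\floor{\lg N}+1}$. Taking the maximum of the two bounds proves \eqref{eq:boClaim}.

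The main obstacle, and the only genuinely clever step, is the treatment of \eqref{eq:aoClaim}: the induction hypothesis available for the non-last children is the weaker $\bo$-bound, whose exponent carries an extra factor $(1+\eps)$, and a naive summation would accumulate these factors across many children and destroy the invariant. Sorting the children by increasing subtree size is exactly what rescues the argument, since it forces every non-last child to have size at most $N/2$, lowering its $\floor{\lg\cdot}$ by at least one and thereby cancelling the stray $(1+\eps)$. Checking the floor-logarithm inequality $\floor{\lg n_i}+1\le\floor{\lg N}$ and verifying that the small additive constants produced by \cref{BasicProperty} are swallowed by the slack between $N-1$ and $N$ are then the remaining routine details.
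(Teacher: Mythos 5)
Your proof is correct and follows essentially the same route as the paper's: induction on $\abs{T_u}$, the decomposition from \cref{classificationNecessary}'s companion \cref{BasicProperty}, the observation that sorting children by subtree size forces every non-last child to have size at most $\abs{T_u}/2$ so its extra factor of $(1+\eps)$ is absorbed into the exponent, and \cref{approxS} for the $b(u)$ bound. The only (welcome) addition is that you verify $\ao(u)-a(u)+1\le 2n$ inside the induction to justify applying \cref{approxS}, a point the paper defers to the remark following the lemma.
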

\begin{proof}
	We  prove the claim by induction on $\abs{T_u}$.
	When $\abs{T_u} = 1$, $u$ is a leaf, so $\bo(u) = \ao(u) = a(u) = t$
	and the claim holds.

	Now let $\abs{T_u} = m > 1$ and assume that the claim holds for all nodes with subtree size $< m$. Let
	$v_1,\ldots,v_k$ be the children of $u$ such that $\abs{T_{v_1}} \le \ldots \le \abs{T_{v_k}}$.
	First, we show that \Cref{eq:aoClaim} holds.
 By \cref{BasicProperty} we have the following expression for $\ao(u) - a(u) +
    1$:
	\begin{align}
		\label{eq:aoProof1}
		\ao(u) - a(u) + 1 =
		\left ( \ao(v_k) - a(v_k) + 1 \right ) +
		\left ( \sum_{i=1}^{k-1} \bo(v_i) - a(v_{i}) + 1 \right ) + 1.
	\end{align}
	It  follows from the induction hypothesis that:
	\begin{align}
		\label{eq:aoProof2}
		\ao(v_k) - a(v_k) + 1 \le
		\abs{T_{v_k}}(1 + \eps)^{\floor{\lg \abs{T_{v_k}}}}
		\le
		\abs{T_{v_k}}(1 + \eps)^{\floor{\lg \abs{T_u}}}.
	\end{align}
    Furthermore, by the ordering of the children, we have $\lg \abs{T_{v_i}}
    \le \lg \abs{T_u} - 1$ for every $i = 1,\ldots,k-1$. Hence:
	\begin{align}
		\label{eq:aoProof3}
		\bo(v_i) - a(v_i) + 1 \le
		\abs{T_{v_i}}(1 + \eps)^{\floor{\lg \abs{T_{v_i}}}+1}
		\le
		\abs{T_{v_k}}(1 + \eps)^{\floor{\lg \abs{T_u}}}.
	\end{align}
	Inserting \Cref{eq:aoProof2,eq:aoProof3} into \Cref{eq:aoProof1} proves invariant \Cref{eq:aoClaim}.

	Since $\bo(u) = \max\set{\bo(v_k), b(u)}$ we only need to upper bound $\bo(v_k) - a(u)+1$ and
	$b(u) - a(u)+1$. First we note that since $b(u)$ is chosen smallest possible such that $b(u) \ge \ao(u)$
	and $b(u)-a(u)+1 \in S$, it is guaranteed by \Cref{approxS} that:
	\[
		b(u) - a(u) + 1 <
		\left ( 1+ \eps \right )
		\left ( \ao(u) - a(u) + 1 \right ) \le
		\abs{T_u}(1 + \eps)^{\floor{\lg \abs{T_u}}+1}
	\]
	Hence we just need to upper bound $\bo(v_k) - a(u)+1$. First we note that just as in \Cref{eq:aoProof1}:
	\begin{align}
		\label{eq:boProof1}
		\bo(v_k) - a(u) + 1 =
		\left ( \sum_{i=1}^k \bo(v_i) - a(v_{i}) + 1 \right ) + 1
	\end{align}
	By the induction hypothesis, for every $i = 1,\ldots,k$:
	\begin{align}
		\label{eq:boProof2}
		\bo(v_i) - a(v_{i}) + 1
		\le
		\abs{T_{v_i}} (1+\eps)^{\floor{\lg \abs{T_{v_i}}}+1}
		\le
		\abs{T_{v_i}} (1+\eps)^{\floor{\lg \abs{T_u}}+1}
	\end{align}
	Inserting \Cref{eq:boProof2} into \Cref{eq:boProof1} gives the desired:
	\begin{align*}
		\bo(v_k) - a(u) + 1 \le
		1 +
		\sum_{i=1}^k \abs{T_{v_i}} (1+\eps)^{\floor{\lg \abs{T_u}}+1}
		\le
		\abs{T_u} (1+\eps)^{\floor{\lg \abs{T_u}}+1}.
	\end{align*}
	This completes the induction. \qed
\end{proof}

By \Cref{newInvariant} we see that for a tree $T$ with $n$ nodes  and $u \in T$:
\begin{align*}
	\ao(u) - a(u) + 1 \le
	\abs{T_u}(1 + \eps)^{\floor{\lg \abs{T_u}}} \le
	n \cdot 2^{\floor{\lg n} \lg(1+\eps)} \le
	n \cdot 2^1 = 2n.
\end{align*}
In particular, for any $u \in T$ we see that $a(u) \leq 2n$, and by \Cref{approxS} the function
\textproc{Assign-New} is well-defined.

We are now ready to describe the labeling scheme:

\textbf{Description of the encoder:}
Given an $n$-node tree $T$ rooted in $r$, the encoding algorithm works by
    first invoking a call to $\Call{Approx-New}{r,0}$. Recall that by
\cref{approxS} we find $b(u)$ such that $b(u) - a(u) + 1 =
\floor{(1+\eps)^k}$ as well as the value of $k$ in $O(1)$ time. For a node $u$, denote
the value of $k$ by $k(u)$
and let $x_u$ and $y_u$ be the bit strings representing $a(u)$ and $k(u)$
respectively, consisting of exactly
$\ceil{\lg(2n)}$ and $\ceil{\lg(4\ceil{\lg n}^2)}$ bits (padding with zeroes if
necessary). This is possible
since $a(u) \in [2n]$ and $k(u) \in \left [ 4\ceil{\lg n}^2 \right ]$.

For each node $u\in T$ we assign the label $\ell(u) = x_u\circ y_u$. Since
\[
	\ceil{\lg(2n)} = 1 + \ceil{\lg n}
	, \quad
	\ceil{\lg(4\ceil{\lg n}^2)} =
	2 + \ceil{2\lg(\ceil{\lg n})} =
	2 + \ceil{2\lg \lg n},
\]
the label size of this scheme is  $\ceil{\lg n} + \ceil{2 \lg \lg n} + 3$.

\mbtk{
\textbf{Description of the decoder:} Let $\ell(u)$ and $\ell(v)$ be the labels of the nodes $u$ and $v$ in a tree $T$.
By the definition of the encoder the labels have the same size and it is $s = z + \ceil{2 \lg z} + 3$
for some integer $z \ge 1$. By using that $s - \ceil{2 \lg s} - 3 = z - O(1)$ we can compute
$z$ in $O(1)$ time. We know that the number of nodes $n$ in $T$ satisfies $\ceil{\lg n} = z$. We can therefore
define $\eps$ to be the unique solution to $\lg(1 + \eps) = \ceil{\lg n}^{-1} = z^{-1}$. Let $x_u$ and $y_u$ be the first
$z+1$ bits and last $\ceil{2\lg z}+2$ bits of $\ell(u)$ respectively. We let $a_u$ and $k_u$ be the integers
in $[2^{z+1}]$ and $[4z^2]$ corresponding to the bit strings $x_u$ and $y_u$ respectively. We define $s_u$
as $\floor{(1+\eps)^{k_u}}$ and $b_u = s_u+a_u-1$. We define $a_v,b_v,k_v,s_v$ analogously. The decoder responds
\texttt{True}, i.e. that $u$ is the ancestor of $v$, iff $a_v \in [a_u, b_u]$.
}

\cref{thm:result} is now achieved by using the labeling scheme described
above. Correctness follows from \cref{correctAlgo}.


\bibliographystyle{amsplain}
\bibliography{bibl}

\end{document}